\documentclass[runningheads]{llncs}

\usepackage[T1]{fontenc}
\usepackage{graphicx}
\usepackage{cite}
\usepackage{algorithm,algorithmic}
\usepackage{xspace}
\usepackage{svg}
\usepackage{caption}
\usepackage{booktabs}
\usepackage{amsmath}
\usepackage{amsfonts}
\usepackage{wrapfig}
\usepackage{tikz}
\usepackage{hyperref}
\usetikzlibrary{arrows.meta}

\newcommand{\hb}{\emph{heartbeat}}
\newcommand{\bm}{\emph{Band Manager}}
\newcommand{\issA}{\emph{Type-A Issuer}}
\newcommand{\issB}{\emph{Type-B Issuer}}
\newcommand{\sysname}{\emph{SpexPay}\xspace}

\newcommand{\algcomment}[1]{\hfill{\scriptsize{// \textit{#1}}}}

\begin{document}
\title{\sysname: A Privacy-Preserving Pay-As-You-Go System for Dynamic Spectrum Sharing}
%
%

\author{Mohaimin Al Barat\inst{1} \and
Hexuan Yu\inst{1} \and
Shaoyu Li\inst{1} \and 
Yang Xiao\inst{2} \and
Yi Shi\inst{1} \and \\
Eric W. Burger\inst{1} \and 
Y. Thomas Hou\inst{1} \and
Wenjing Lou\inst{1}
}

\titlerunning{\sysname: A Privacy-Preserving Pay-As-You-Go System}
\authorrunning{M.A. Barat et al.}
%
\institute{Virginia Tech, USA \and
University of Kentucky, USA
}
\maketitle              
\begin{abstract}
Dynamic Spectrum Sharing (DSS) is a cornerstone of next-generation wireless systems, yet existing solutions such as Spectrum Access Systems (SAS) rely on centralized administrators that expose sensitive operational metadata and lack cryptographic transaction accountability. Though SAS administrators, such as Google, have introduced pay-as-you-go pricing models~\cite{googleSpectrumAccess}, these approaches still face significant privacy and accountability challenges as DSS evolves toward a more open and large-scale spectrum marketplace. We present \sysname, a privacy-preserving and auditable pay-as-you-go spectrum usage framework that enforces fine-grained, usage-linked payments without revealing user identities. \sysname~integrates BBS$+$ verifiable credentials, unlinkable session pseudonyms, and selective-disclosure proofs to enforce privacy-preserving access authorization, while leveraging Solidity-based smart contracts to realize automated and non-repudiable escrow settlement. By recording only pseudonymous usage evidence and hash-chained metering data on-chain, the system achieves strong unlinkability, preserving verifiable accountability and auditability. A full prototype demonstrates low end-to-end latency ($\approx$150 ms) and modest on-chain cost ($\approx$603K gas or $\approx$\$0.9), showing that \sysname~is practical for real-world DSS deployments. We also evaluated the user-side cryptographic operations on a Raspberry Pi 5 to assess scalability and suitability for {edge-class hardware}. Our code and artifacts are publicly available at \url{https://github.com/iambarat/SpexPay}.

\keywords{spectrum sharing, zero knowledge proof, verifiable credentials, blockchain, smart contracts}

\end{abstract}


\section{Introduction}
\label{sec:intro}

The radio spectrum is the fundamental enabler of modern wireless communication and sensing systems. 
To accommodate the explosive growth of 5G, Wi-Fi~6/7, satellite, and radar applications, regulators such as the U.S. Federal Communications Commission (FCC) and the National Telecommunications and Information Administration (NTIA) have opened portions of previously exclusive federal bands for shared civilian use~\cite{fcccbrs,fcc_cband_transition,fcc_3450_service}, giving rise to dynamic spectrum sharing (DSS). In principle, DSS allows multiple users to access common bands under a sharing framework, alleviating spectrum scarcity by increasing the utilization of existing bands. The practical relevance of this setting is grounded in the Citizens Broadband Radio Service (CBRS), where access to the 3.5\,GHz band is coordinated by a Spectrum Access System (SAS) under the FCC Part~96 framework. The FCC has approved multiple SAS administrators for commercial deployment, including CommScope, Federated Wireless, Google, and Sony~\cite{fcc_part96_cbrs,fcccbrs}. Commercial DSS management frameworks further allow registered users to request spectrum access based on the pay-as-you-go (PAYG) model, such as Google's pricing~\cite{googleSpectrumAccess}, which subjects all users to a market mechanism for spectrum access rather than granting them unconditional preemptive access as was done in the past.

As demand outpaces licensed holdings, PAYG is increasingly seen as critical for market-driven access to mid- and high-band spectrum. Unlike long-term licenses, it dynamically reallocates spectrum to its highest-valued use and reduces underutilization. By aligning consumption with economic incentives, PAYG-based DSS could become the dominant model for next-generation shared bands, helping operators extract greater value from scarce spectrum while accelerating innovation in cellular, IoT, and sensing.

\textbf{Privacy Challenge in PAYG.} While PAYG-style DSS improves utilization, it changes spectrum access: rather than autonomous, exclusive use, users must transact with third-party \bm s (e.g., SAS administrators) for short-term leases. This raises privacy challenges: PAYG requires users to register, authenticate, and submit access requests, exposing sensitive information---identity, location, time, transmission power, bandwidth, and mission profiles---to commercial band managers.
{As an example, we can consider a Navy shipborne radar that obtains short-term shared access through a commercial PAYG path. Under today’s centralized SAS, each grant request discloses location, time, frequency, bandwidth, and dwell to a commercial administrator.} Aggregated over time, these transactions enable inference and linkability attacks, allowing \bm s or adversaries to reconstruct mobility trajectories, operational rhythms, or mission profiles. The commercial nature of \bm s further risks monetization or unauthorized exposure of such sensitive data.

Since PAYG envisions participation by all users, including public-sector ones such as military radars and radio astronomy stations~\cite{nsf_payg_solicitation_2024}, privacy is often not merely desired but legally mandated given their national importance~\cite{ntia-spectrum-strategy}. {Aside from the US, under Europe's Licensed Shared Access (LSA) framework~\cite{ecc-dec-1402,ec-dec-2014} and national local-licensing regimes~\cite{ofcom-sal}, the PAYG path exposes comparable metadata for military, PMSE, and private-network operators.} For all users, elevating privacy protection is crucial for scaling up the PAYG-based sharing economy. Hence we target the research question---\emph{can we design privacy-preserving mechanisms that fulfill the PAYG model's financial obligation of usage-based spectrum consumption without compromising a user's operational privacy?}

\textbf{Contributions.} We propose \sysname, a new privacy-preserving PAYG system for next-generation spectrum sharing economy. At a high level, our system enables fine-grained spectrum access control and supports micro-payments tied to actual utilization, enforced through cryptographic proofs of activity that preserve user anonymity and unlinkability. Our motivation stems from two observations:  
(1) Future DSS ecosystems will consist of diverse devices, requiring ephemeral access sessions rather than long-term leases.  
(2) Sustainable DSS markets demand transparent, auditable, yet privacy-preserving payment mechanisms to foster trust among regulators, service providers, and users.

Concretely, \sysname~employs a two-tiered cryptographic credential model over a public blockchain. The two tiers reflect the regulated DSS access framework: users obtain both a regulatory-eligibility credential and a commercial-authorization credential prior to accessing spectrum. To initiate a session, the commercial-type issuer deposits funds into an escrow smart contract and publishes session parameters on-chain. For each access request, the user generates a selectively disclosed verifiable presentation from both credentials to prove eligibility to the \bm. During the session, the device periodically emits precomputed hash-chained heartbeats, preventing replay or over-claiming; the \bm~claims payment by submitting the latest valid heartbeat with the session ID. The contract enforces correct settlement, bounds claim windows, and ensures fair billing. Commercial-type issuers maintain two registries for: (1) \emph{auditability} (private) - device IDs mapped by session IDs for the auditability of users' spectrum usage, and (2) \emph{revocation list} (shared with BMs) - a session revocation list to which band managers subscribe, enabling them to receive updates when a malicious user’s session is revoked.

While our design leverages general-purpose privacy primitives, DSS introduces domain-specific challenges that are not addressed in existing privacy-preserving payment or authentication systems. 
Spectrum access is usage-metered and continuous, demanding verifiable accounting rather than one-time transactions. Access must also be conditioned on regulatory eligibility, tightly coupling authentication with economic settlement. Finally, SAS deployments require auditability and dispute resolution under regulatory oversight without compromising user privacy. These requirements  fundamentally differ from traditional anonymous payment or credential systems, which do not support continuous usage proofs, eligibility-payment binding, or regulatory accountability.

In summary, this paper makes the following key contributions:
\begin{itemize}
    \item \textbf{PAYG Model for Spectrum Sharing:} We introduce \sysname, the first PAYG system for next-generation spectrum economies with fine-grained, usage-based micro-payments while supporting critical regulatory functions as in the incumbent paradigms.
    \item \textbf{Privacy-preserving spectrum access control and micro-payments:}  
    We design a dynamic access control scheme based on a two-tiered spectrum credential system, enabling user devices to authenticate for spectrum usage without revealing real identities to the \bm s. 
    \item \textbf{Prototype Evaluation:} We implement and evaluate the prototype, measuring $\approx$150 ms total computation time for credential issuance, proof generation, and verification, and $\approx$603K gas for smart contract settlement. Evaluation on a Raspberry Pi further confirms scalability to {edge-class hardware}.

\end{itemize}

\section{Related Work}
\label{sec:related_work}

\textbf{Privacy in Spectrum Sharing and Access System.}
Prior work on spectrum-sharing privacy has largely focused on mitigating inference attacks against incumbents and users. Bahrak et al.~\cite{bahrak2014protecting} and Clark et al.~\cite{clark2016can} propose obfuscation mechanisms to hide operational parameters in database-driven access systems. While effective against external observers, these assume a trusted spectrum manager and do not address leakage toward the manager itself---critical in fine-grained, usage-based settings. TrustSAS~\cite{grissa2019trustsas} identifies privacy risks in centralized SAS architectures and integrates blockchain with cryptographic primitives to improve transparency and trust. However, it relies on long-term registration and does not support dynamic usage verification or usage-linked payment enforcement. Pri-Share~\cite{yu2024pri} extends privacy to multi-SAS environments using MPC and threshold cryptography, enabling conflict-free allocation without revealing user requests, but does not incorporate usage-linked economic settlement. 

\textbf{Anonymous Credentials and Privacy-Preserving Payments.}
Coconut \cite{sonnino2019coconut} provides threshold-issued, re-randomizable credentials with unlinkable presentations, suitable for decentralized environments. Several cryptocurrency payment systems provide strong transaction privacy, but they are not designed for regulated, usage-metered services such as dynamic spectrum sharing. Zerocash~\cite{BenSasson2014Zerocash} enables fully shielded transactions using zk-SNARKs, while PrivPay~\cite{moreno2015privpay} achieves anonymous payments in decentralized credit networks. \emph{Monero} relies on ring signatures to obscure transaction linkability \cite{vanSaberhagen2013CryptoNote,Noether2014CryptonoteReview}. These systems are fundamentally general-purpose: they do not encode access eligibility or bind payments to externally verifiable service usage.

Existing spectrum-sharing systems improve allocation privacy but lack integrated economic enforcement, while anonymous credentials and private-payment systems do not support regulated, usage-metered access control. None of the prior work simultaneously achieves \emph{(i)} unlinkable access authentication, \emph{(ii)} verifiable usage metering, and \emph{(iii)} on-chain, non-repudiable settlement. Our system addresses this gap by tightly coupling anonymous credentials with cryptographic usage proofs and programmable payment enforcement for DSS.
\section{Preliminaries}
\label{sec:preliminaries}

\subsection{Dynamic Spectrum Sharing Fundamentals}

\begin{figure}
    \centering
    \includegraphics[width=.8\textwidth]{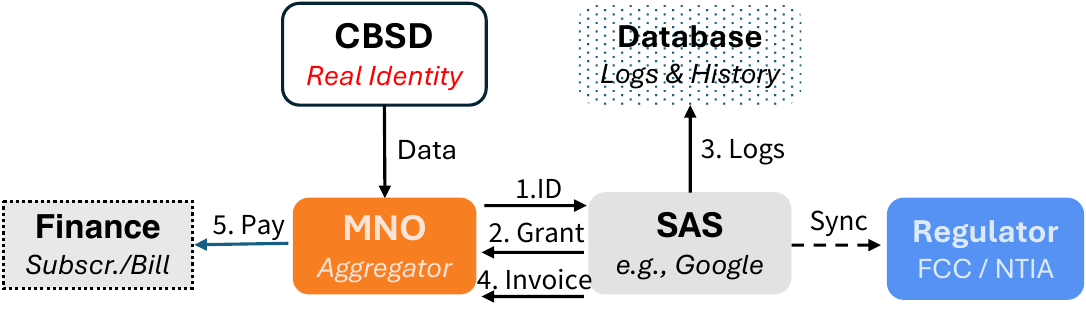}
    \caption{\small Current Centralized SAS Model.
    The Citizens Broadband Radio Service Device (CBSD) functions as the end-user spectrum device.
    The \textbf{MNO} (or Incumbents, e.g., naval user) acts as a domain proxy, forwarding the device's \textit{Real ID} and \textit{GPS} directly to the central \textbf{SAS} for access grants (1\&2), which logs sensitive operational data (3). Financial settlement is decoupled from actual usage, relying instead on rigid subscriptions or bulk invoices between the SAS and the MNO's \textbf{Financial System} (4\&5).}   
    \label{fig:sas_model}
    \vspace{8pt}
\end{figure}

DSS enables intensive reuse of underutilized bands by allowing multiple classes of users to coexist under regulatory coordination. In the U.S., the FCC regulation for the CBRS band (3.55--3.70~GHz) defines a tiered access model in which a SAS dynamically authorizes secondary users based on real-time spectrum availability while protecting the pre-emptive access rights of primary/incumbent users.
Similar sharing models exist for other bands: the AWS-3 band requires coordination between federal incumbents and commercial licensees to manage aggregate interference~\cite{fcc_aws3_report}; the 3.45--3.55~GHz band adopts a federal/non-federal sharing regime using cooperation planning areas and periodic use areas~\cite{fcc_3450_service}; and the C-Band transition illustrates incumbent relocation and repurposing for flexible commercial use~\cite{fcc_cband_transition}.


This shift from exclusive long-term licensing toward dynamic, short-term access calls for privacy-preserving authorization and fine-grained accounting. As shown in Fig.~\ref{fig:sas_model}, in systems such as CBRS, secondary users submit \textbf{grant requests} specifying operational parameters (frequency range, location, duration); upon \textbf{grant authorization}, the SAS admin issues time- and location-bound grants subject to interference protection and incumbent activity. To retain a grant, devices periodically transmit \textbf{heartbeats} signaling continued, compliant operation. As usage becomes increasingly transient, access is structured around short-lived, session-based grants rather than multi-year licenses, motivating tighter coupling of authorization, usage monitoring, and settlement.

\subsection{BBS$+$ Signatures \& Zero Knowledge Proof (ZKP)}
BBS$+$ signatures~\cite{au2006constant} extend the original Boneh–Boyen short signature scheme~\cite{boneh2004short} to support signing of multiple attributes and efficient selective disclosure. We briefly review BBS$+$ signatures following the convention from~\cite{camenisch2004signature,boneh2004short}. \\
A BBS$+$ signature scheme consists of three algorithms:
\begin{enumerate}
    \item $\mathsf{KeyGen}(1^\lambda)$: samples $x\leftarrow\mathbb{Z}_p$, outputs $(sk=x, pk=g_2^x)$.
    \item $\mathsf{Sign}(sk, (m_1,\dots,m_\ell))$: (a) sample $e,s\leftarrow\mathbb{Z}_p$, (b) compute $B = g_1 \cdot h_0^s \prod_{i=1}^{\ell} h_i^{m_i}$, (c) compute $A = B^{1/(x+e)}$.
    Output signature $\sigma=(A,e,s)$.
    \item $\mathsf{Verify}(pk,\sigma,(m_1,\dots,m_\ell))$ checks: $e(A, pk\cdot g_2^{e}) \stackrel{?}{=} e\!\left(g_1 \cdot h_0^{s}\prod_{i=1}^{\ell}h_i^{m_i},\, g_2\right)$.
    
\end{enumerate}
Here, $g_1 \in \mathbb{G}_1$ and $g_2 \in \mathbb{G}_2$ are fixed generators, and $\{h_i\}$ denotes the per-attribute generators used to bind the message vector. To achieve verifiable privacy, we employ Schnorr-style ZK proofs and the Fiat–Shamir transformation~\cite{fiat1986prove}. For a statement of knowledge $\mathcal{R}(x,w)$, the prover demonstrates knowledge of a witness $w$ satisfying the relation without revealing $w$.  
For example, to prove knowledge of discrete log $x$ such that $y = g^x$, the prover chooses random $r$, computes $t = g^r$, and derives a challenge $c = H(g, y, t)$. The response $s = r - c x \bmod p$ is sent to the verifier, who checks: $g^s y^c \stackrel{?}{=} t$.

\section{System and Framework Overview}
\label{sec:overview}

\subsection{System Participants}

The \sysname~model introduces four primary actors:
\begin{itemize}
    \item \textbf{Type-A (Regulatory) Issuer:} Trusted credential issuer, such as a government authority, that certifies regulatory eligibility and oversees compliance but does not directly manage user payments. \issA~mints and regulates the token used for payment, and deploys the escrow smart contract that settles payments based on usage.
    \item \textbf{Type-B (Commercial) Issuer:} Credential issuers that issue commercial or economic credentials such as deposit amounts on escrow contract and user limits. MNOs or naval ships that handle multiple users are considered \issB. The payment to \emph{BM}s is settled via the escrow contract, so no direct communication between \emph{BM}(s) and \issB~is necessary. Since \issB~can obtain this token from~\issA~via a standard transaction, we omit further discussion of this transfer mechanism.

    \item \textbf{Band Manager (BM):} Authorized intermediary responsible for granting and monitoring usage, collecting heartbeats from active sessions, and claiming payments from the escrow contract based on verified proofs of service. User's identity is always hidden from \emph{BM}s.
    \item \textbf{User (Device):} The end user or network device that requests temporary spectrum access. Each user maintains unlinkable session credentials and interacts with the \emph{BM}~through verifiable proofs.
\end{itemize}

{\sysname~maps onto existing roles, so no new actor is required. The regulator (\issA) can mandate the privacy-preserving layer as a condition of SAS certification, giving it system-wide leverage; \emph{commercial issuers} and \emph{BM} adopt it for non-repudiable billing and dispute evidence while shedding metadata-storage liability; and users, especially public-sector incumbents, gain operational-privacy protection.}

\subsection{Technical Challenges and Design Intuition}
{Designing a PAYG-based spectrum-sharing system that simultaneously achieves privacy, accountability, and real-time enforcement poses several challenges: (i) \emph{binding without linking}---the hidden device identifier must be provably identical across both credentials, yet never become a cross-session correlation handle to ensure privacy; (ii) \emph{metering under anonymity}---usage must be billed over time without a linkable per-interval signature; (iii) \emph{trust-minimized settlement}---an untrusted \emph{BM} must be paid exactly for verified usage via a contract holding only constant-size state; and (iv) \emph{auditability}---a regulator must resolve disputes without the chain or \emph{BM} ever learning a real identity. 
}

\textbf{Privacy vs. auditability} is addressed by decoupling identity from settlement. Identity-bound credentials are handled off-chain using unlinkable BBS\(+\) selective-disclosure proofs, while the blockchain records only ephemeral session identifiers and hash-chain commitments for auditable usage. Fresh session IDs prevent cross-session linkability, while \issB~can maintain an off-chain mapping to real device identities for regulatory audit when needed.

\textbf{Fair settlement and collusion resistance} are enforced directly at the contract level. \issB’s deposit and the \emph{BM}’s claims are bound by smart-contract logic, and each claim must include a verifiable hash-chain heartbeat. Invalid, replayed, or inflated claims are automatically rejected, ensuring that each usage interval is billed exactly once without relying on trust.

\textbf{Scalability and real-time operation} are achieved by generating all cryptographic proofs off-chain and committing only constant-size values on-chain. This keeps contract execution lightweight and compatible with the tight latency requirements of dynamic spectrum sharing.

\subsection{Threat Model}
We assume the following trust and threat model:

\noindent 1. \emph{Issuers} are honest entities that issue verifiable credentials to the users. \issA~(e.g., FCC) is trusted by all other system participants. \issB s (e.g., MNOs) operate within the user trust boundary, as users (e.g., CBSD devices) are typically affiliated with them. However, they are not trusted by the \emph{BM}s; instead, their trust is enforced through an on-chain smart contract deployed by the \issA. Both of the issuers are trusted for keeping the user’s real identity within their trust boundaries.\\
2. \emph{BM}s are untrusted and can be malicious, and they might want to claim payments for the portion that the users have not used. This is prevented by smart contract bounded cryptographic proofs, where the contract validates the claims and pays the \emph{BM}s accordingly. \\
3. Users are untrusted by default; hence, all usage claims must be cryptographically verifiable through hash chained \hb s and session ID.\\
4. The blockchain and smart contracts are tamper-resistant under standard Byzantine fault tolerance assumptions~\cite{zhang2024reaching}.

\subsection{System Overview and Workflow}
As shown in Figure~\ref{fig:model_flow}, the high-level workflow of the system is as follows:

\begin{figure}[h]
    \centering
    \includegraphics[width=.7\textwidth]{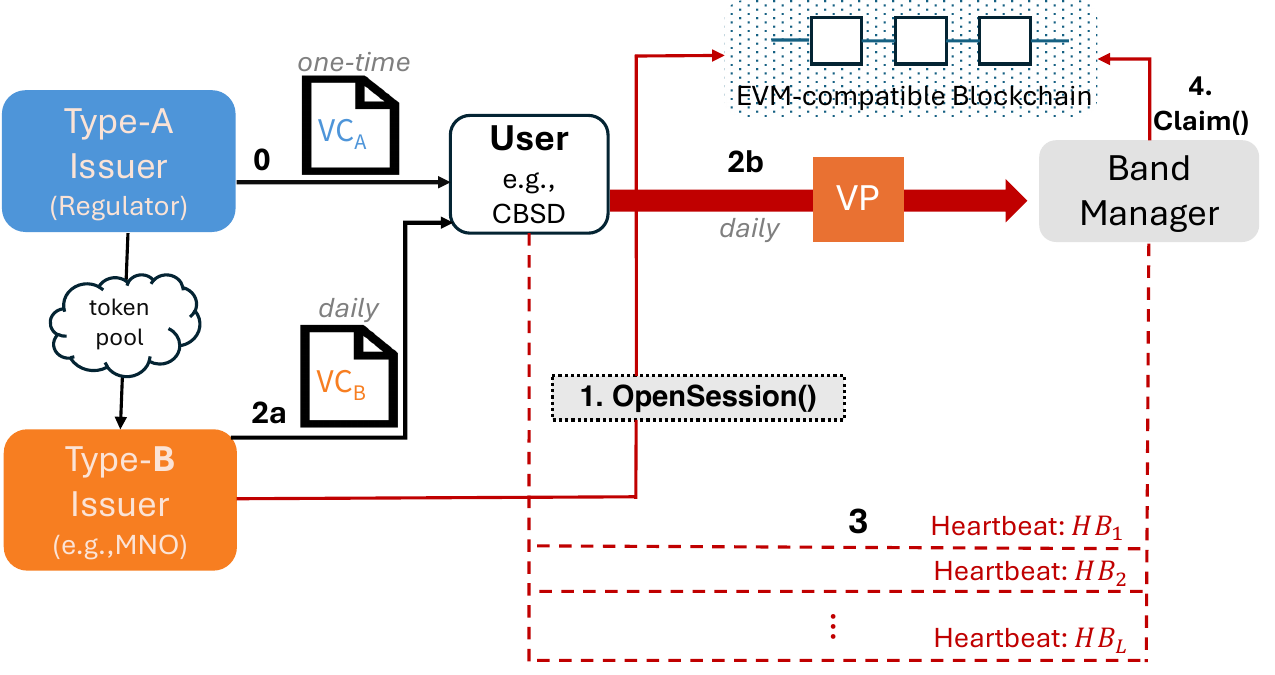}
    \vspace{-6pt}
    \caption{High-level Workflow of \sysname. Step 0 happens once, while Steps 1-4 take place on a daily basis.}
    \label{fig:model_flow}
    \vspace{8pt}
\end{figure}

\textbf{Step 0. Regulatory Credential Issuance:} Each user obtains a long-term regulatory credential $VC_A$ from \issA. This credential certifies the user’s license or eligibility using BBS$+$ signatures, enabling future selective disclosure of attributes without revealing the user’s identity. This issuance happens one time as long as there is no change in the regulatory terms or certifications.

\textbf{Step 1. Session Initialization:} Users are granted spectrum usage access on a daily basis, which we model as a \emph{session} in our system. \issB~initializes a session for the escrow smart contract by depositing tokens in the contract on behalf of the user. The contract records session parameters, such as, session ID ($s\_id$), unit price, limit, deposit amount, etc. The smart contract keeps track of the current session and updates the states according to \emph{BM}'s claim and payouts. Before initializing the session, \issB~sends the session ID to the user and gets the required information, such as the hash chain root $y_0$.

\textbf{Step 2a. Commercial Credential Issuance:} 
\issB~issues a commercial credential $VC_B$ to the user, also constructed using BBS$+$ signatures, valid for a specific session (e.g., daily) and cryptographically binding the user to the session parameters from Step 1. Session initialization precedes $VC_B$ issuance to ensure atomicity: only after issuance can the user present credentials and transmit heartbeats, and the band manager claim payments.

\textbf{Step 2b. Anonymous Spectrum Access Request:} The user proves the validity of both credentials ($VC_A$ and $VC_B$) by submitting a \emph{verifiable presentation} (VP) to \emph{BM}. This proof ensures that the device identity remains hidden and long-term history (across multiple sessions) is unlinkable. \emph{BM}~verifies the VP and grants user access.

\textbf{Step 3. Heartbeats Transmission and Verification:} During active transmission, the user periodically transmits \hb s (e.g., every 5 minutes) using precomputed hash chain values and session ID. Each heartbeat serves as a verifiable indicator of the ongoing channel occupancy. After receiving each \hb, the \emph{BM}~verifies it by generating the hash values for that specific index using previously received hashes.

\textbf{Step 4. Claiming and Automated Settlement:} After collecting a batch of \hb s, the BM then submits a claim request to the escrow contract, attaching the latest verified heartbeat along with the session ID. The \emph{BM}~can claim payout for any number of cumulative \hb s. The smart contract validates the claim against session metadata and releases proportional payment from the escrow to the \emph{BM}'s wallet. \issB~can close the session after the expiration time.

\section{\sysname Detailed Design}
\label{sec:system_design}

\subsection{System Setup}
\label{subsec:system_setup}
Before any PAYG interaction begins, a device or user must first determine whether spectrum is available in its geographical zone. In practice, this can be achieved by querying the BM or a public spectrum-availability index exposing only occupancy data. This discovery step is orthogonal to our protocol and independent of its cryptographic and contractual mechanisms. Accordingly, our system setup assumes that availability information has already been obtained, session ID is generated by \issB~and sent to the user. The PAYG workflow begins from that point onward. 

\textbf{Hash Chain Setup.}
To meter spectrum usage in discrete intervals, the user constructs a forward-secure hash chain of length $L$, where $L$ is the number ($limit$) of heartbeats the user can send within that session. The generation follows:

$\qquad \qquad \qquad y_L \overset{\$}{\leftarrow} \{0,1\}^\lambda,
\qquad 
y_{i-1} = H(s\_id \parallel i-1 \parallel y_i)$

Sequentially, the user calculates from $y_L$ to hash-chain root $y_0$. After generating the hash chain, the user sends hash chain root $y_0$ to \issB~when sending the request for credential $VC_B$ issuance. The user also mentions the limit $L$, along with other required parameters. These hash values of the chain work as the proof of cumulative \hb s. As all the $y_i$ are computed with $y_{i+1}$, it becomes impossible for the BM to pre-compute any $y_i$ to claim payment settlement for the \hb~that is not sent by the user.

Hash chains offer three key advantages: (1) \emph{low-overhead} authentication—each heartbeat transmits only a hash, eliminating costly signatures; (2) \emph{tamper evidence}—missing or out-of-order heartbeats are detectable by rewinding the chain using a prior verified hash and index; and (3) \emph{compact verification}—the band manager stores only the latest valid hash, with consistency against the initial commitment $y_0$ guaranteeing full sequence integrity. In \sysname, hash chains compactly encode cumulative spectrum usage, enabling the contract to verify aggregate cost from the latest hash in the band manager's proof.

\textbf{Credentials Issuance.}
Let $\mathbb{G}_1, \mathbb{G}_2$ be cyclic groups of prime order $p$ with generators $g_1 \in \mathbb{G}_1$, $g_2 \in \mathbb{G}_2$, and a bilinear map $\hat{e} : \mathbb{G}_1 \times \mathbb{G}_2 \rightarrow \mathbb{G}_T$.  
The issuer’s secret key is $x \leftarrow \mathbb{Z}_p$, and the public key is $w = g_2^x$.  
For attribute bases $h_0, h_1, \dots, h_\ell \in \mathbb{G}_1$, a message vector $m_1,\dots,m_\ell \in \mathbb{Z}_p$, and random $s \in \mathbb{Z}_p$, the issuer computes: $B = g_1 h_0^s \prod_{i=1}^{\ell} h_i^{m_i}, \quad
A = B^{1/(x + e)} \in \mathbb{G}_1,$ 
where $e \leftarrow \mathbb{Z}_p$ is a randomization exponent.  
The signature on the message set is $(A, e, s)$, and verification checks: $\hat{e}(A, w \cdot g_2^{e}) \stackrel{?}{=} \hat{e}(g_1 h_0^s \prod_{i=1}^{\ell} h_i^{m_i}, g_2).$ 

Each device is issued two sets of verifiable credentials, one from each authority, that share the user's device ID as a common and hidden attribute. The \emph{Type-A credential} $VC_A$, issued by \issA, contains regulatory attributes such as access tier (e.g., Priority Access License (PAL) or General Authorized Access (GAA)), device class, and regulatory expiration time, etc. The \emph{Type-B credential} $VC_B$, issued by \issB, contains commercial attributes, including pricing parameters, heartbeat, usage limit, unique session ID, and a commercial expiration time, etc.


\subsection{Smart Contract Session Initialization}

\begin{algorithm}[t]
\caption{\small Opening Escrow Session by Type-B Issuer}
\label{alg:open-session}
\scriptsize
\begin{algorithmic}[1]

\STATE \texttt{OpenSession}($\texttt{s\_id}, y_0, L, \texttt{unit\_price}, \texttt{deposit}, \texttt{expB}, bm\_addr$) \algcomment{session initialization}

\REQUIRE $!\texttt{Sessions}[s\_id].\texttt{open} \wedge deposit \ge unit\_price \wedge expB > \textsf{now}$ \algcomment{parameter verification}

\REQUIRE $\textsf{ERC20.transferFrom}(\textsf{issuerB} \rightarrow \textsf{contract}, \textsf{deposit}) = \textsf{true}$ \algcomment{token transfer to contract}

\STATE $\texttt{Sessions}[s\_id] \gets \{ \texttt{issuerB}=\textsf{caller},~ \texttt{bm}=bm\_addr, \texttt{deposit}=deposit,$
\STATE \hspace{2.5cm} $\texttt{spent}=0,\texttt{unit\_price}=unit\_price, \texttt{idx\_last}=0,$ 
\STATE \hspace{2.5cm} $\texttt{y\_last}=y_0, \texttt{L}=L,~ \texttt{expB}=expB, \texttt{open}=\textsf{true} \}$ \algcomment{set default values}

\STATE \textbf{emit} $\texttt{SessionOpened}(s\_id, \textsf{caller}, bm\_addr, L,expB, unit\_price, deposit)$

\end{algorithmic}
\end{algorithm}

Before sending $VC_B$ to the user, \issB~initiates a payment session for the escrow contract. The purpose of this is to verify that it matches the prerequisites (e.g., \issB~has more balance than $deposit$ amount). As seen in algorithm~\ref{alg:open-session}, \issB~initiates the session by sending: $s\_id$ (unique session ID), $y_0$ (hash-chain root), $L$ (limit on number of heartbeats), $unit\_price$ (per-heartbeat spectrum price), $deposit$ (escrow amount), $expB$ (expiration time of $VC_B$), and $bm\_addr$ (band manager's address) as parameters. Upon successful verification of the parameters, the contract initializes the session state with default values $\texttt{spent}=0$, $\texttt{idx\_last}=0$, and $\texttt{y\_last}=y_0$. The deposit is pulled from \issB's account and locked in escrow. Here, $idx\_last$ and $y\_last$ are the latest \hb~index counter and the latest hash value that the BM claimed for that session, respectively, and will be updated after each settlement. 

This phase caps \issB's financial exposure to the escrowed deposit and opens the session for BM settlement via authentic heartbeats. \issB~then issues $VC_B$ to the user, enabling credential presentation and immediate heartbeat transmission.

For auditability purpose, \issB~maintains a registry by mapping \texttt{s\_id} with \texttt{device\_id}. With this registry, they can later audit the usage of a specific user by retrieving the on-chain data, as \texttt{s\_id} is published on-chain.  

\subsection{Selective-Disclosed Verifiable Presentation (VP)}
After obtaining both credentials, the device proves to the BM that it holds valid BBS$+$ signatures $\sigma_A$ and $\sigma_B$ on $VC_A$ and $VC_B$, that all disclosed regulatory and commercial attributes are correctly signed, and that the hidden attribute $m_{id}$ is consistent across both credentials. All proofs are generated off-chain and verified by the BM. Concretely, the user computes randomized commitments \(T_{1_A}, T_{2_A}, T_{1_B}, T_{2_B}\). All commitments are bound together by a single Fiat--Shamir challenge \(c\), and the user returns the response vector \((e, r_1, r_2, s, m_{\mathsf{id}})\), forming a standard selective-disclosure proof of knowledge of both signatures and the hidden attribute.
Upon receiving the presentation, the BM deterministically recomputes all commitments, reconstructs the challenge using the same transcript, and verifies that the pairing equations for both signatures hold under the disclosed subset of attributes. Verification succeeds only if all algebraic relations, signature equations, and zero-knowledge responses are satisfied, ensuring that the presentation is valid, unforgeable, and unlinkable.
We follow the standard BBS$+$ protocol with ZKP for credential issuance and verification; full proof procedures are deferred to the appendix.

\subsection{Heartbeat Transmission}
In our model, a \hb~is a lightweight proof-of-progress sent periodically from the device to the BM to account for metered usage. One option is to sign each \hb~with the user's session key for the BM to verify, but at a fine interval (e.g., 5 minutes) per-\hb~signature verification is costly. Instead, we derive each \hb~from a forward-secure hash chain, generated as in Section~\ref{subsec:system_setup} and uniquely bound to the PAYG session. A transmitted heartbeat at index $i$ is $\mathrm{HB}(i) = (s\_id,\, i,\, y_i)$.

\noindent \textbf{Verification.}
Upon receiving $\mathrm{HB}(i)$, the BM reconstructs the chain value of the previous index by computing $\hat{y}_{i-1} = H(s\_id \,\|\, (i-1) \,\|\, y_i)$.
If $\hat{y}_{i-1}$ matches the last verified value $y_{i-1}$ stored in state, the heartbeat is accepted, ensuring freshness and replay resistance.

\noindent \textbf{Handling Out-of-Order or Missing Heartbeats.}
The BM can authenticate skipped indices by iteratively hashing backward from the received value. For example, if $\mathrm{HB}(5)$ arrives after $\mathrm{HB}(2)$, the BM computes:

$    y_4 = H(s\_id \,\|\, 4 \,\|\, y_5), \quad
    y_3 = H(s\_id \,\|\, 3 \,\|\, y_4), \quad
    y_2 = H(s\_id\,\|\, 2 \,\|\, y_3),$
    
until it regenerates the previously validated $y_2$. If the recomputed value matches the stored chain tip, all intermediate heartbeats are implicitly authenticated and can be credited accordingly.
This construction provides replay resistance, implicit aggregation, and forward integrity, enabling efficient metering with minimal communication overhead.

\subsection{Payment Settlement Using Heartbeats}
\begin{algorithm}[h]
\caption{\small Claiming Payment by \bm~using \hb}
\label{alg:settle-payment}
\scriptsize
\begin{algorithmic}[1]

\STATE \texttt{Claim($s\_id, i, y_i$)} \algcomment{payment claim call with $s\_{id}$ and hash chain value}

\REQUIRE $\textsf{caller} = \texttt{Sessions}[s\_id].\texttt{bm}$ \algcomment{checks if band manager is authorized}
\STATE $s \gets \texttt{Sessions}[s\_id]$
\STATE \textbf{Require} $s.\texttt{open} = \textsf{true}; \textsf{now} < s.\texttt{expB}; s.\texttt{idx\_last} < i \le s.\texttt{L}$ \algcomment{checks conditions}
\IF {\textbf{PreimageCheck}$(y_i, i, idx\_last, y\_last)=1$}
\STATE $\Delta \gets i - s.\texttt{idx\_last}, payout \gets \Delta \cdot s.\texttt{unit\_price}$
\STATE \textbf{Require} $s.\texttt{spent} + payout \le s.\texttt{deposit}$
\STATE $s.\texttt{idx\_last} \gets i, s.\texttt{spent} \gets s.\texttt{spent} + payout, s.y\_last \gets y_i$

\STATE \textbf{Require} $\textsc{ERC20.transfer}(s.\texttt{bm},~ payout) = \textsf{true}$

\STATE \textbf{emit} $\textsc{Claimed}(s\_id,~ i,~ \Delta,~ payout)$
\ENDIF

\end{algorithmic}
\end{algorithm}

During the session, the BM receives and then submits cumulative \hb s to obtain a proportional payment. The BM can claim settlement for any incremental \hb s. 
As shown in algorithm~\ref{alg:settle-payment}, to claim payment for \hb s $idx\_last$ through $i$, the BM calls $\textsf{Claim}(\textsf{s\_id}, i, y_i)$.

The contract checks the preconditions, and then checks the authenticity of the claimed $y_i$ by generating it using $y\_last$ and $idx\_last$. It recomputes the hash-chain links from $i$ down to the last claimed index $idx\_last$: $y_{j-1}' = H({s\_id}\parallel j-1 \parallel y_j'):\text{for } j = i,i-1,\dots,{idx\_last}+1$.

\begin{definition}[Valid Heartbeat Proof]
A submitted pair $(i,y_i)$ is \emph{valid} iff ${PreimageCheck}(y_i, i, {idx\_last}, {y\_last}) = 1,$ where, 
\[
{PreimageCheck} =
\begin{cases}
1, & \text{if } H({s\_id}\parallel j-1 \parallel y_j)=y_{j-1}; \forall j\in[{idx\_last}+1, i],\\
0, & \text{otherwise}.
\end{cases}
\]
\end{definition}
\noindent Only if the final computed $y'$ equals the stored $\texttt{y\_last}$ the claim will succeed.

\textbf{Payout Computation.} 
If verification succeeds, the number of newly revealed beats is: $\Delta = i - {idx\_last}.$ The contract computes the payment: ${payout}=\Delta \cdot {unit\_price}$. Then the states are updated: ${spent} \gets  {spent} + {payout}, {idx\_last} \gets i, {y\_last} \gets y_i$, and transfers the payout to the BM. The contract enforces three critical protections: (1) \emph{replay resistance}—each heartbeat can be claimed only once; (2) \emph{overcounting prevention}—the contract deterministically recomputes hash-chain links between the last claimed and newly submitted index, ruling out skipped, injected, or inflated intervals; and (3) \emph{session binding}—each $y_i$ is domain-separated to $(\textit{s\_id}, i)$, preventing cross-session reuse or adversarial mixing of hash-chain states.


\subsection{Session Closing}

\begin{algorithm}[h]
\caption{Closing Escrow Session by Type-B Issuer}
\label{alg:close-session}
\scriptsize
\setlength{\algorithmicindent}{1.2em}
\begin{algorithmic}[1]
  \STATE \texttt{CloseSession}($s\_id$) \hfill \textit{// session closing call}
  \REQUIRE $\textsf{caller} = \texttt{Sessions}[s\_id].\texttt{issuerB}$
  \STATE $s \gets \texttt{Sessions}[s\_id]$
  \REQUIRE $s.\texttt{open} = \textsf{true}, s.\texttt{expB}<\textsf{now}$
  \STATE $s.\texttt{open} \gets \textsf{false}$
  \STATE $refund \gets s.\texttt{deposit} - s.\texttt{spent}$ \hfill \textit{// remaining balance}
  \IF{$refund > 0$}
      \REQUIRE $\textsc{ERC20.transfer}(s.\texttt{issuerB},~refund) = \textsf{true}$
  \ENDIF
  \STATE \textbf{emit} $\textsc{Closed}(s\_id,~refund)$
\end{algorithmic}
\end{algorithm}

As shown in Algorithm~\ref{alg:close-session}, \issB~may close the session after the expiration of $VC_B$ ($expB$). $expB$ ensures the BM retains sufficient time after session end to claim outstanding payments, as claims remain valid until $expB$.
The contract refunds any unused deposit to \issB's account. 
Once the session is closed, no further heartbeats can be claimed, ensuring a well-defined settlement process, as any remaining escrow is deterministically refunded.

\section{Security Analysis}
\label{sec:security_analysis}

We consider four core security properties: unlinkability, unforgeability, fair settlement, and collusion resistance for our \sysname~system. Our guarantees rely on standard assumptions, including the q-SDH assumption in $(\mathbb{G}_1,\mathbb{G}_2)$~\cite{boneh2004short}, discrete logarithm hardness in the Pedersen commitment group $\mathbb{G}_T$, collision and preimage resistance of the hash function $H$, the random oracle model for Fiat-Shamir challenges~\cite{fiat1986prove}, and standard blockchain consensus properties~\cite{xiao2023bd,grissa2019trustsas}. For completeness, readers are referred to the appendix for detailed security proofs.

\textbf{Theorem 1 (Unforgeability).} \emph{If BBS${+}$ signatures are existentially unforgeable under chosen-message attacks, then no probabilistic polynomial-time adversary can forge a valid verifiable presentation in our construction.}

\textit{Proof sketch:}  
A valid forged presentation implies the ability to produce a valid BBS${+}$ signature $(A, e, s)$ without interacting with the issuer, which contradicts the unforgeability of BBS${+}$ signatures issued by both issuers.

\textbf{Theorem 2 (Unlinkability).}
\emph{
If the BBS${+}$ based verifiable presentation protocol is zero-knowledge
and all commitments used in the proof are perfectly or statistically
hiding, then for every probabilistic polynomial-time adversary
$\mathcal{A}$,
$
\left|
\Pr\big[ \mathsf{Expt}^{\text{unlink}}_{\mathcal{A}}(\lambda) = 1 \big]
- \frac{1}{2}
\right|
$
is negligible in the security parameter $\lambda$. In particular, the BM
cannot computationally distinguish whether two presentations are generated
from the same hidden identifier or from two different identifiers.
}

\textit{Proof sketch:}  
Each presentation includes freshly randomized BBS${+}$ commitments and Schnorr-style proof responses. By the zero-knowledge property, a simulator can produce transcripts indistinguishable from real presentations without access to the hidden identifier or any secret attributes. Thus, no distinguisher can link two transcripts without breaking the zero-knowledge of the proof system or the randomization of BBS${+}$ signatures.

\textbf{Theorem 3 (Fair Settlement).}
\emph{Fair settlement guarantees that payments reflect actual spectrum usage. If the hash function $H(\cdot)$ is preimage-resistant, then a malicious band manager cannot increase the settlement amount without providing a valid sequence of heartbeats. On the other hand, a malicious user cannot receive service without producing payment-generating heartbeats.
}

\textit{Proof sketch:}  
To claim more beats than transmitted, an adversary would need to forge $y_i$ without knowledge of $y_{i+1}$, which requires inverting the hash function. Conversely, without providing payment-generating heartbeats with valid presentations, a user cannot receive valid authorizations to use the spectrum. 

\textbf{Theorem 4 (Collusion Resistance).}
\emph{
No coalition of malicious band managers can link two spectrum access requests from the same user or manipulate the settlement outcomes.
}

\textit{Proof sketch:}  
Zero-knowledge of hidden attributes prevents leakage across presentations. Settlement logic is deterministic and depends only on valid heartbeats, preventing manipulation.

We additionally verify \sysname~symbolically in Tamarin~\footnote{https://tamarin-prover.com/}, confirming: (i) sessions open only after valid issuance-backed presentations, (ii) accepted claims are linked to their session, (iii) accepted claims match previously generated values, and (iv) no claim is accepted twice. This complements the formal proofs: the theorems establish primitive security, while Tamarin rules out protocol-logic flaws such as replay or inconsistent settlement.

\section{Evaluation}

\label{sec:evaluation}

This section provides the complexity analysis and evaluates the computational performance, scalability, and on-chain costs of our proposed \sysname protocol. We demonstrate that \sysname~authorization is practical for real-world deployment and that on-chain settlement supports high-frequency micro-payments. We evaluate two dimensions: (i) cryptographic computational costs; and (ii) gas consumption and storage overhead of on-chain operations, and also discuss confirmation latency for payment settlement. All experiments were executed on a Linux-based machine with a 12th Gen Intel(R) Core(TM) i7-12700K, 32GB RAM, and NVIDIA GeForce RTX 3070Ti GPU, unless otherwise mentioned.

\begin{table}[t]
\centering
\caption{Asymptotic Complexity of Cryptographic and Contract Operations}
\label{tab:complexity}

\resizebox{0.65\textwidth}{!}{%
\scriptsize
\begin{tabular}{lcc}
\toprule
\textbf{Operation} & \textbf{Cost} & \textbf{Dominant Primitives} \\
\midrule

$VC_A$ issuance &
$O(k)$ &
$k$ multiexp + 1 pairing \\

$VC_B$ issuance &
$O(k')$ &
$k'$ multiexp + 1 pairing \\

Signature randomization &
$O(1)$ &
Constant multiexp \\

Blinded triple $(\tilde A, D, \tilde B)$ &
$O(1)$ &
3 multiexp \\

Schnorr \#1 commitment &
$O(1)$ &
3 exponentiations \\

Schnorr \#2 commitment &
$O(1)$ &
1 exponentiation \\

Fiat-Shamir challenge &
$O(k+k')$ &
Hash over $n$ transcript items \\

Response computation &
$O(1)$ &
Constant exponentiations \\

Verifier recomputation &
$O(k + k')$ &
$(k + k')$ multiexp + 2 pairings \\

Verify FS challenge &
$O(k+k')$ &
Hashing \\

Hash chain setup &
$O(L)$ & 
Hashing \\

Heartbeat verification &
$O(1)$ &
1 hash \\

\texttt{openSession}() &
$O(1)$ &
SSTORE + ERC20 Transfer \\

\texttt{claim}() &
$O(\Delta)$ &
$\Delta$ hashes + SLOAD + SSTORE \\

\texttt{closeSession}() &
$O(1)$ &
SSTORE + ERC20 Transfer \\
\bottomrule
\end{tabular}
}
\vspace{8pt}
\end{table}

\subsection{Complexity Analysis}
Table~\ref{tab:complexity} summarizes the asymptotic costs of cryptographic and contract-level operations. $VC$ issuance requires linear-time multi-exponentiation in $\mathbb{G}_1$ and one pairing, scaling as $O(k)$ and $O(k')$ for $VC_A$ and $VC_B$ ($k{=}4$, $k'{=}9$ in our instantiation). Credential randomization and blinded-triple $(\tilde{A},D,\tilde{B})$ computation cost $O(1)$ exponentiations, preserving unlinkability; Schnorr commitments and Fiat--Shamir challenge generation are likewise constant given a fixed-size transcript. Verification is dominated by $O(k+k')$ group operations and two pairings. Chain initialization costs $O(L)$ hashing, while per-heartbeat emission and verification are $O(1)$. \texttt{OpenSession}() and \texttt{CloseSession}() are constant-time and \texttt{Claim}() is $O(\Delta)$. Overall, cryptographic costs scale linearly with credential size; all else is constant. Keeping expensive operations off-chain yields low device overhead, efficient verification, and predictable on-chain execution suited to high-frequency metering.

\subsection{Cryptographic Implementation and Results}
All measurements in Table~\ref{tab:crypto_summary} were obtained using a Python benchmarking harness that wraps Dock Network’s Rust implementation of BBS$+$ over the BLS12-381 curve via PyO3 bindings. The harness was implemented in Python~3.10.12, invoking Rust code compiled with \texttt{rustc~1.91.0} in release mode, and uses Dock’s \texttt{bbs\_plus} and \texttt{proof\_system} crates. This setup reflects optimized BLS12-381 arithmetic without debug overhead. All reported averages are computed over 1000 iterations of each step.

\begin{wraptable}{r}{0.42\textwidth}
  \centering
  \caption{Average computation time for cryptographic steps}
  \label{tab:crypto_summary}
  \vspace{-0.5 em}
  \scalebox{0.8}{
  \begin{tabular}{lc}
    \toprule
    Step & Mean (ms) \\
    \midrule
    Issuance $VC_A$ & 10.3 \\
    Issuance $VC_B$ & 12.2 \\
    Proof generation & 45.9 \\
    Proof generation (RPi) & 189 \\
    Proof verification & 84.6 \\
    Hash chain  & 0.14 \\
    Hash chain (RPi) & 0.59 \\
    Proof size & $\approx$857 bytes \\
    \bottomrule
  \end{tabular}
  }
\end{wraptable}

\emph{$VC_A$ issuance} yields a 10.3~ms average. \emph{Issuance $VC_B$} signs a larger attribute vector and therefore incurs a slightly higher cost (12.2~ms). \emph{Proof generation} by the user constructs a selective-disclosure proof, generating Schnorr commitments, and proving equality of the hidden device identifier, resulting in a 45.9~ms average. \emph{Proof verification} by the BM recomputes the randomized BBS$+$ checks and witness-equality constraints using multi-exponentiations and pairings, with an average cost of 84.6~ms for a proof of size $\approx$857~bytes. \emph{Hash-chain generation} captures the one-time cost of building the heartbeat chain with SHA-256. We model a session as 24 hours; a 5-minute heartbeat interval yields 288 heartbeats, so we set the limit $L = 300$, requiring only 0.14~ms. Each \emph{heartbeat emission} is $O(1)$, below 0.1~ms. To assess practicality, we also measured the main user-side operations on a Raspberry Pi 5. {We use a Raspberry Pi 5 as a proxy for a real CBSD's host-compute envelope: deployed CBRS small cells run embedded Linux on ARM Cortex-A application processors~\cite{marvell-armada,cloudran-arm}, the same ARMv8-A family as the Pi 5's Cortex-A76~\cite{arm-a76-trm}.} There, proof generation averages 189~ms and hash-chain generation 0.59~ms, confirming that all off-chain operations are lightweight and practical on edge-class hardware, compatible with real-time session establishment and high-frequency metering.

\subsection{On-chain Gas Consumption}

We benchmark the \texttt{Escrow} contract's deployment and runtime gas on an EVM-compatible blockchain using Foundry with Anvil, which provides a deterministic, reproducible local backend free from network variability. Each experiment deploys a fresh contract and runs the full workflow---\texttt{OpenSession}, \texttt{Claim}, \texttt{CloseSession}---varying the heartbeat gap $\Delta$ per claim (1, 10, 50, 100 beats) to evaluate hash-chain verification costs.

\begin{wraptable}{r}{0.52\textwidth}
  \vspace{-1.3em}
  \centering
  \caption{On-chain Gas Consumption}
  \label{tab:escrow-gas-anvil}
  \scalebox{0.8}{
  \begin{tabular}{l|c|c|c|c}
    \toprule
    Function & Min & Avg & Median & Max \\
    \midrule
    OpenSession  & 251{,}762 & 251{,}767 & 251{,}762 & 251{,}774 \\
    Claim        & 31{,}794  & 122{,}254 & 109{,}719 & 304{,}587 \\
    CloseSession & 26{,}604  & 46{,}742  & 50{,}573  & 63{,}051 \\
    \bottomrule
  \end{tabular}
  }
\end{wraptable}
\noindent Table~\ref{tab:escrow-gas-anvil} reports deployment and per-function gas under Anvil defaults, with min/avg/median/max summarizing repeated executions. One-time deployment by \issA~costs $\approx$1.62M gas ($\approx$\$3 at the 60-day average Ethereum gas price~\footnote{https://etherscan.io/gastracker}), yielding a 7.3~kB image well below the EIP-170 limit. \texttt{OpenSession} incurs a fixed 251K gas, dominated by storage initialization and an ERC-20 \texttt{transferFrom}. \texttt{Claim} scales as $O(\Delta)$: $\Delta \leq 3$ costs 31K–50K gas, while 50–100 beats cost 0.11M–0.30M gas; since the band manager bears this cost, larger $\Delta$ lowers per-session overhead. \texttt{CloseSession} is inexpensive (26K–63K gas: constant-time arithmetic, a refund transfer, state updates). End-to-end average gas per session is $\approx$603K (at $\Delta{=}300$, $L{=}300$), or $\approx$\$0.9---negligible given the fairness and auditability guarantees, though it should be read as a representative aggregate under the measured workload, not a universal per-session constant. Post-closure, each session retains $\approx$288 bytes ($\approx$9 slots). Overall, \sysname~keeps on-chain costs bounded and predictable: constant setup, claims scaling with heartbeat gaps, and minimal storage.


\subsection{On-chain Settlement Latency}
\sysname's settlement logic is fully atomic: each \texttt{Claim} and \texttt{CloseSession} transaction verifies the hash-chain gap, updates session state, and transfers tokens within a single EVM execution. Settlement latency thus depends solely on block time and confirmation policy, not on multi-round interactions. On Ethereum L1, transactions are typically included within one block ($\approx$12~s), with strong economic finality after $\approx$12 confirmations (2-3~minutes). On rollup-based L2s (e.g., Optimism~\footnote{https://docs.optimism.io/} or Arbitrum~\footnote{https://arbitrum.io/rollup}), sequencer confirmation reduces this to 1-2~seconds, though full security requires awaiting finalized L1 inclusion; {thus we recommend confirming L1 finality before treating a \texttt{Claim} as final}.

\section{Conclusion}
\label{sec:conclusion}
This paper presents \sysname, a PAYG framework for dynamic spectrum sharing that enables verifiable, privacy-preserving micro-payments between spectrum users and band managers. By synergizing selectively-disclosed verifiable credentials with smart contract–based atomic settlement, \sysname tightly couples access authorization, fine-grained usage accounting, and auditable payment enforcement. Unlike general-purpose anonymous payment or credential systems, our design addresses the unique requirements of regulated, usage-metered services by tightly integrating privacy-preserving authentication, verifiable usage accounting, and programmable settlement. This domain-specific integration is essential for practical deployment in DSS environments and cannot be achieved through direct reuse of existing privacy-preserving solutions. Complexity analysis and an end-to-end implementation demonstrate that both cryptographic and on-chain costs remain lightweight. Furthermore, the implementation demonstrates that \sysname is practical for real-world dynamic spectrum sharing deployments, adding approximately $150\,\mathrm{ms}$ end-to-end latency alongside a modest on-chain gas cost of approximately $\approx 603K$ ($\approx \$0.9$) per session.
\\

\noindent \textbf{Acknowledgments.} This work was supported in part by the National Science Foundation under grants 2433904, 2433905, 2331936, 2332675, 2154929, 2247560, 2247561, 2442382, by the Office of Naval Research under grant N00014-24-1-2730, and by the Virginia Commonwealth Cyber Initiative (CCI).

\bibliographystyle{splncs04}
\bibliography{reference}





\appendix

\section{Appendices}

\subsection{Security Proofs}

\begin{proof}[\textbf{Unforgeability}]
Let $\mathcal{A}$ be any PPT adversary that outputs a forged presentation $\pi^\ast$ accepted by the verifier. \\ 

\textit{Event $\mathsf{E}_{\text{sig}}$: Forged BBS${+}$ Signature.}
Let $\pi^\ast$ contain a valid BBS${+}$ signature $(A^\ast,e^\ast,s^\ast)$ that was never issued. By standard extraction from the selective-disclosure $\Sigma$-protocol, the corresponding message vector $m^\ast$ can be recovered. We construct a reduction $\mathcal{B}$ that answers all issuance queries using the BBS${+}$ signing oracle and outputs $(m^\ast,(A^\ast,e^\ast,s^\ast))$ as a forgery. This contradicts BBS${+}$ EUF-CMA security, implying $\Pr[\mathsf{E}_{\text{sig}}] \le \operatorname{negl}(\lambda)$ by Lemma~\ref{lem:bbs-unforgeable}. Thus, the protocol is unforgeable.

\begin{lemma}[BBS$+$ EUF-CMA Unforgeability]
\label{lem:bbs-unforgeable}
    No PPT adversary can produce a valid BBS${+}$ signature on any message that was not previously queried to the signing oracle, except with negligible probability in $\lambda$.

\end{lemma}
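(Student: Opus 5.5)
The plan is to reduce the lemma to the $q$-SDH assumption in $(\mathbb{G}_1,\mathbb{G}_2)$, along the lines of the original analysis of Au, Susilo and Mu~\cite{au2006constant} and Boneh--Boyen~\cite{boneh2004short}. Suppose $\mathcal{A}$ makes at most $q$ signing queries on vectors $m^{(1)},\dots,m^{(q)}$ and outputs $(m^\ast,(A^\ast,e^\ast,s^\ast))$ with $m^\ast$ never queried. The reduction $\mathcal{B}$ receives a $q$-SDH instance $(g_1,g_1^x,\dots,g_1^{x^q},g_2,g_2^x)$. Its goal is a pair $(c,g_1^{1/(x+c)})$.

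$\mathcal{B}$ first guesses which forgery type will occur, since the forgery splits into three disjoint cases.
\begin{itemize}
\item Type~1: $e^\ast\notin\{e_1,\dots,e_q\}$.
\item Type~2: $e^\ast=e_j$ and $A^\ast=A_j$ for some $j$, but $(m^\ast,s^\ast)\neq(m^{(j)},s_j)$.
\item Type~3: $e^\ast=e_j$ but $A^\ast\neq A_j$.
\end{itemize}
Type~3 is impossible. Indeed, $A^{x+e}$ is determined by $A$ and $e$, so equal $e$ together with the verification equation forces $A^\ast=A_j$ whenever the $B$ values match. The residual mixed case of equal $e$, different $A$ and different $B$ is therefore subsumed: it reduces to a fresh $q$-SDH-style solution via the same polynomial argument as Type~1, with $e^\ast$ replaced relative to $f$.

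For Type~1, $\mathcal{B}$ picks $e_1,\dots,e_q$ in advance and sets $f(X)=\prod_i(X+e_i)$. It defines $g_1' = g_1^{f(x)}$, which is computable from the powers of $g_1^x$, and $w=g_2^x$. It also chooses the bases $h_0,\dots,h_\ell$ as known powers $g_1'^{\alpha_k}$ (computed likewise), with all $\alpha_k$ random.

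Each query $j$ is answered as follows. $\mathcal{B}$ picks $s_j$, notes that $B_j=g_1'^{\,1+\alpha_0 s_j+\sum\alpha_k m_k}$, and computes $A_j=B_j^{1/(x+e_j)}$. This is possible because $f(X)/(X+e_j)$ is a polynomial, so $A_j$ can be evaluated in the exponent from the instance.

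On a Type~1 forgery, write $t^\ast=1+\alpha_0s^\ast+\sum\alpha_k m^\ast_k$. Then $A^\ast=g_1^{t^\ast f(x)/(x+e^\ast)}$. Polynomial division gives $f(X)=\gamma(X)(X+e^\ast)+\gamma_{-1}$ with $\gamma_{-1}\neq0$ because $e^\ast$ is fresh. Provided $t^\ast\neq0$, which fails with probability only $1/p$ since the $\alpha$'s are hidden, $\mathcal{B}$ extracts $g_1^{1/(x+e^\ast)}$ and outputs the pair $(e^\ast,g_1^{1/(x+e^\ast)})$.

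For Type~2, the simulation is set up differently. $\mathcal{B}$ embeds $x$ into the base $h_0$ and uses a self-generated signing key. It guesses the index $j$ with loss $1/q$. The collision $A^\ast=A_j$, $e^\ast=e_j$ forces $B^\ast=B_j$. Hence $h_0^{s^\ast-s_j}\prod h_k^{m^\ast_k-m^{(j)}_k}=1$, a nontrivial representation of the identity in known bases. This yields a discrete logarithm, and in particular a solution to the embedded instance. The reduction can be routed through $q$-SDH as in~\cite{au2006constant}, or stated under DL, which $q$-SDH implies.

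The main obstacle I expect is the Type~2 simulation. $\mathcal{B}$ must answer signing queries without the trapdoor on $h_0$, so it needs to choose $s_j$ adaptively to cancel the unknown contribution, and the distribution of $s_j$ must remain uniform. I would handle this by setting $h_0=g_1^{\beta x}$-style bases with a known key $x'$, and picking $A_j$ first and then solving for $s_j$ as in Au et al. Combining the cases gives $\Pr[\text{forge}]\le 3q\cdot\mathrm{Adv}^{q\text{-SDH}}+O(q/p)$, which is negligible.
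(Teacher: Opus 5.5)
\textbf{There is a genuine gap in the Type~3 case.} A forgery with $e^\ast=e_j$ and $A^\ast\neq A_j$ is not impossible. Because $B=A^{x+e}$, it simply forces $B^\ast\neq B_j$. So your impossibility argument only rules out a vacuous sub-case, and the ``residual mixed case'' you set aside is all of Type~3.

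Nor is that case subsumed by your Type~1 argument. In that simulation $f(X)=\prod_k(X+e_k)$ contains the factor $(X+e_j)$, so $\gamma_{-1}=f(-e^\ast)=0$. In fact $A^\ast=g_1^{t^\ast f(x)/(x+e_j)}$ is then computable from the $q$-SDH instance itself, so nothing is extracted. Dropping $(X+e_j)$ from $f$ is what is needed. But then $f(X)t_j/(X+e_j)$ is no longer a polynomial, and your simulator cannot answer the $j$-th signing query; that step is what you have not supplied.

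For comparison, the paper gives no proof of this lemma. It invokes the standard EUF-CMA security of BBS$+$ under $q$-SDH from~\cite{au2006constant} as a black box. A self-contained reduction like yours therefore has to cover all three cases.

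\textbf{The missing reduction uses the trick you placed in Type~2.} Guess $j$ and set $f(X)=\prod_{k\neq j}(X+e_k)$. Choose $\kappa,c,\alpha_0,\dots,\alpha_\ell$ at random, and publish $g_1'=g_1^{f(x)(\kappa(x+e_j)-c)}$ and $h_k=g_1^{\alpha_k f(x)}$.
\begin{itemize}
\item For a query $k\neq j$, the exponent of $B_k$ is $f(X)$ times a linear polynomial, evaluated at $x$. Since $(X+e_k)$ divides $f(X)$, $A_k$ is computed as before.
\item For query $j$, set $s_j=(c-\sum_k\alpha_k m^{(j)}_k)/\alpha_0$. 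This is uniform because $\kappa$ masks $c$ in $g_1'$, and it gives $B_j=g_1^{\kappa f(x)(x+e_j)}$ and $A_j=g_1^{\kappa f(x)}$.
\item A Type~3 forgery then satisfies $A^\ast/A_j=g_1^{t^\ast f(x)/(x+e_j)}$ with $t^\ast=-c+\alpha_0s^\ast+\sum_k\alpha_k m^\ast_k\neq0$, which is forced by $A^\ast\neq A_j$. Since $f(-e_j)\neq0$, the Boneh--Boyen division yields $(e_j,g_1^{1/(x+e_j)})$.
\end{itemize}

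\textbf{Type~2 has no real obstacle, but your embedding is too narrow.} With a self-chosen key $x'$ the reduction simply signs honestly, so no adaptive choice of $s_j$ is needed. The challenge must, however, be embedded in every base, for example $h_k=g_1^{a_k}(g_1^{x})^{b_k}$ for all $k$. An EUF-CMA forgery has $m^\ast\neq m^{(j)}$ but may have $s^\ast=s_j$, so a challenge placed only in $h_0$ can drop out of the relation.

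\textbf{Your bound on $t^\ast=0$ in Type~1 is misjustified.} The $\alpha_k$ are determined by the public bases, not hidden. The event $t^\ast=0$ means $B^\ast=A^\ast=1$, a nontrivial representation of the identity in $g_1',h_0,\dots,h_\ell$. It needs its own discrete-log reduction, or must be excluded by an $A\neq1$ check in verification.
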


\end{proof}

\begin{proof}[\textbf{Unlinkability}]
We prove unlinkability via a hybrid argument over the ZK simulator. $\mathsf{G}_0$ is the real experiment, with challenge presentations from real witnesses including the hidden identifier. $\mathsf{G}_1$ replaces all challenge proofs with simulated transcripts for the same statements; by Lemma~\ref{lem:zk-indistinguishability}, $\mathsf{G}_0\approx\mathsf{G}_1$. $\mathsf{G}_2$ swaps the hidden identifier with statements unchanged; as the simulator is witness-independent, $\mathsf{G}_1$ and $\mathsf{G}_2$ are identically distributed. By Lemma~\ref{lem:unlinkability}, presentations from different identifiers are indistinguishable, so the adversary's advantage is negligible:
$
\text{Adv}_{\mathsf{unlink}}^{\mathcal{A}}(\lambda)
=
\left|
\Pr[\mathsf{G}_0=1] - \tfrac{1}{2}
\right|
\le \operatorname{negl}(\lambda).
$
\end{proof}

\begin{lemma}[ZK Indistinguishability]
\label{lem:zk-indistinguishability}
Let $\pi$ be a real selective-disclosure proof generated with witness $w$,
and let $\pi_{\mathsf{sim}}$ be a simulated transcript for the same public
statement. Then for any PPT adversary, 
$\left|
\Pr[\mathcal{A}(\pi)=1]-\Pr[\mathcal{A}(\pi_{\mathsf{sim}})=1]
\right|
\leq
\operatorname{negl}(\lambda).
$

\end{lemma}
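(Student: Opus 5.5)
The plan is to establish Lemma~\ref{lem:zk-indistinguishability} by giving an explicit honest-verifier simulator for the underlying $\Sigma$-protocol and then lifting it to the non-interactive Fiat--Shamir proof in the random oracle model. The proof $\pi$ has two components. The first is the randomized signature elements, namely the blinded triples $(\tilde A, D, \tilde B)$ for each of $VC_A$ and $VC_B$. The second is the Schnorr-style transcript: the commitments $T_{1_A},T_{2_A},T_{1_B},T_{2_B}$, the challenge $c$, and the responses $(e,r_1,r_2,s,m_{\mathsf{id}})$. The argument handles these two parts separately.

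\textbf{Step 1: the randomized signature elements.} I will show that these elements are distributed independently of the witness $w$. In standard BBS$+$ presentations, $\tilde A = A^{r_1}$ with a fresh uniform $r_1\in\mathbb{Z}_p^\ast$. The remaining elements are $\bar A = \tilde A^{-e}B^{r_1}$ and $D = B^{r_1}h_0^{-r_2}$.

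Given any $\tilde A$, the element $\bar A$ is determined by the verification equation $\hat e(\bar A,g_2)=\hat e(\tilde A,w)$. The element $D$ is uniform in $\mathbb{G}_1$ because $r_2$ is uniform. Hence the simulator can proceed as follows:
\begin{itemize}
\item sample $\tilde A$ uniformly in $\mathbb{G}_1$;
\item set $\bar A = \tilde A^{x}$ implicitly, computing it via the pairing relation;
\item sample $D$ uniformly in $\mathbb{G}_1$.
\end{itemize}
The resulting triple has exactly the real distribution, perfectly and for every witness. The same argument is applied independently to both credentials.

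\textbf{Step 2: the Schnorr part.} Here I use special honest-verifier zero-knowledge. The simulator first picks $c$ uniformly and then picks all responses uniformly in $\mathbb{Z}_p$, using a single shared response for $m_{\mathsf{id}}$ in both credentials' equations. It then solves for each commitment $T$ from the verification equations, exactly as in the discrete-log example $t=g^s y^c$ from the Preliminaries.

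In the real protocol, the responses are masks $\tilde r - c\cdot(\text{witness})$ with uniform $\tilde r$, so they are uniform. The commitments are then uniquely determined by the responses and $c$. Consequently, real and simulated $(T,c,\text{resp})$ are identically distributed. The equality constraint on $m_{\mathsf{id}}$ is preserved because a single response value is reused in both relations.

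\textbf{Step 3: Fiat--Shamir in the random oracle model.} This is the main obstacle. The simulator must program $H$ at the transcript point $(\text{statement},\tilde A,\bar A,D,T_{1_A},\dots,T_{2_B})$ to return its chosen $c$. Programming fails only if $\mathcal{A}$ has already queried that point. The commitments contain uniform group elements, for instance those induced by the fresh $\tilde r$ masks, so that point has min-entropy at least $\log p$.

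The failure probability is therefore at most $q_H/p$ per proof, and a union bound over all simulated proofs gives at most $q_H q_P/p$, where $q_H$ is the number of random oracle queries and $q_P$ the number of presentations. Combining this with the perfect simulation from Steps 1--2 gives
\[
\left|\Pr[\mathcal{A}(\pi)=1]-\Pr[\mathcal{A}(\pi_{\mathsf{sim}})=1]\right|\le q_Hq_P/p,
\]
which is negligible in $\lambda$. I expect the care to lie in defining the hash transcript to include every public element. Otherwise the programmed point might not be fresh, or the simulated proof might not verify.
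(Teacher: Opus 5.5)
The paper gives no proof of this lemma: it is stated in the appendix and invoked in the unlinkability hybrid as the standard zero-knowledge property of the Fiat--Shamir BBS$+$ presentation. Your Steps~2 and~3 supply exactly that standard argument and are fine: uniform responses with one shared $\widehat m_{\mathrm{id}}$, commitments solved from the verification equations, and $H$ programmed at a point whose $T$-components are uniform. The gap is in Step~1. The instruction ``set $\bar A=\tilde A^{x}$ implicitly, computing it via the pairing relation'' is not an efficient operation: $\hat e(\bar A,g_2)=\hat e(\tilde A,w)$ lets one \emph{check} a candidate $\bar A$, not \emph{compute} one. In the paper's setting there is no efficient map $\mathbb{G}_2\to\mathbb{G}_1$ (BLS12-381, type-3) and the public key is only $w=g_2^{x}$. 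That equation is literally BLS verification of $\bar A$ as a signature on the point $\tilde A$, so computing $\tilde A^{x}$ from $\tilde A$ and $w$ is the co-CDH problem underlying BLS. Hence the witness-free simulator you describe cannot be run in polynomial time unless co-CDH is easy. As written, Step~1 does not give a ZK simulator, even though your distributional claim about the triple is correct.

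The repair is easy, but it must be done for the paper's blinding rather than the CDL16-style one you wrote down ($\tilde A=A^{r_1}$, $\bar A=\tilde A^{-e}B^{r_1}$, $D=B^{r_1}h_0^{-r_2}$). The paper uses $\tilde A=A^{r_1r_2}$, $D=B^{r_2}$, $\tilde B=D^{r_1}\tilde A^{-e}$; since $B=A^{x+e}$, this gives $\tilde B=\tilde A^{x}$. Here $D$ is uniform on $\mathbb{G}_1\setminus\{1\}$ because of $r_2$, and $\tilde A$ is uniform on $\mathbb{G}_1\setminus\{1\}$ and independent of $D$ because of the fresh factor $r_1$. So the triple has the same distribution for every valid signature and witness. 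You then have two options:
\begin{enumerate}
\item Treat $(\tilde A,D,\tilde B)$ as part of the ``public statement'' in the lemma, simulate only the Schnorr/Fiat--Shamir part, and note separately that the triple is witness-independent. This is all the unlinkability hybrid uses, since the challenger holds valid credentials anyway.
\item Give the simulator any one valid signature $(A_0,e_0,s_0)$ with commitment $B_0$. Set $\tilde A=A_0^{t}$ and $\tilde B=(B_0A_0^{-e_0})^{t}$ for uniform $t\in\mathbb{Z}_q^{*}$, with $D$ sampled uniformly.
\end{enumerate}
Correspondingly, your Step~2 should use the paper's equations $T_1=\tilde B^{c}\tilde A^{\widehat e}D^{\widehat r_1}$ and $T_2=B_{\mathrm{rev}}^{c}D^{\widehat r_3}h_{id}^{\widehat m_{id}}h_0^{\widehat s}$, with $r_3=r_2^{-1}$. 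Finally, the paper's Fiat--Shamir input omits $D$ and $\tilde B$. That is a soundness (weak Fiat--Shamir) issue, not a zero-knowledge one: freshness of the programmed point already follows from $T_1$ being uniform, and the verifier hashes the same point you program.
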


\begin{lemma}[Unlinkability of Presentations]
\label{lem:unlinkability}
For any two hidden identifiers $m_{\mathrm{id}}^{(0)}, m_{\mathrm{id}}^{(1)}$
and any PPT adversary with oracle access,
the distributions of their corresponding presentation transcripts are
computationally indistinguishable. That is, $\left|
\Pr[\mathsf{Expt}^{\mathsf{unlink}}_{\mathcal{A}}(\lambda)=1]
-
\frac{1}{2}
\right|
\leq
\operatorname{negl}(\lambda)$.
\end{lemma}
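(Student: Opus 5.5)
The plan is a hybrid argument over the presentation transcript, where each hybrid removes one place the hidden identifier $m_{\mathrm{id}}$ enters. The unlinkability experiment works as follows. The adversary picks two identifiers $m_{\mathrm{id}}^{(0)},m_{\mathrm{id}}^{(1)}$ and obtains credentials $VC_A,VC_B$ for each through oracle access. A bit $b$ is drawn uniformly, and the adversary receives a presentation on $m_{\mathrm{id}}^{(b)}$ for the disclosed attributes it chose. Those disclosed attributes must be equal in both worlds, or the game is trivial. The goal is to show that the view for $b=0$ is computationally indistinguishable from the view for $b=1$; the bound on $|\Pr[\mathsf{Expt}^{\mathsf{unlink}}_{\mathcal{A}}(\lambda)=1]-\tfrac12|$ then follows directly.

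\emph{Step 1: the randomized signature part.} In BBS$+$ selective disclosure the user publishes the blinded triple $(\tilde A, D, \tilde B)$, computed as $\tilde A = A^{r_1}$, $D=B^{r_1}h_0^{-r_2}$, $\tilde B = \tilde A^{-e}B^{r_1}$. I will show that for uniformly random $r_1,r_2$ this triple is distributed as a uniformly random element of $\mathbb{G}_1$ together with values fixed by the public verification relation $\hat e(\tilde A,w)=\hat e(\tilde B,g_2)$. The distribution is therefore independent of $(A,e,s)$ and of $m_{\mathrm{id}}$, because $r_1$ masks $A$ and $r_2$ statistically masks the $h_0^{s}$ component. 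This is where the "perfectly/statistically hiding commitments" hypothesis is used, and it is the same for $VC_A$ and $VC_B$.

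\emph{Step 2: the Schnorr and equality proofs.} With the blinded values fixed, I replace the $\Sigma$-protocol transcript, consisting of the commitments $T_{1_A},T_{2_A},T_{1_B},T_{2_B}$, the challenge $c$, and the responses, by the honest-verifier ZK simulator. The simulator programs the random oracle at $c$. By Lemma~\ref{lem:zk-indistinguishability} this changes the adversary's view only negligibly. The simulated transcript depends only on the public statement, and the statement no longer mentions $m_{\mathrm{id}}$, since the equality of the hidden identifier across the two credentials is proven inside the proof rather than revealed. Consequently, swapping $m_{\mathrm{id}}^{(0)}$ for $m_{\mathrm{id}}^{(1)}$ leaves this hybrid identically distributed. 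Running the hybrids back toward the real transcript for $b=1$ completes the chain, and this is exactly the content of Lemma~\ref{lem:unlinkability}.

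\emph{Step 3: handling many queries.} Multiple presentations and issuance queries are handled by a standard per-presentation hybrid. There are polynomially many hybrids, each contributing negligible loss, so the total loss remains negligible.

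The main obstacle is the session-level metadata, not the cryptography. The verifier also sees $s\_id$, $y_0$, and the disclosed attributes of $VC_B$. I have to argue that these are sampled freshly per session and independently of $m_{\mathrm{id}}$. For $s\_id$ this means assuming that \issB\ generates it uniformly. For $y_0$ it follows from the random $y_L$ and the random-oracle modeling of $H$. Without these assumptions the statement does not hold. A second delicate point is that the random oracle must be programmable consistently across all simulated presentations. I would first try to address this by noting that each challenge input contains fresh random commitments, so collisions with the adversary's earlier oracle queries occur with probability at most $q^2/p$, which is negligible.
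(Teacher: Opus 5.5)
Your proposal is correct and follows the paper's route. The paper's argument sits in the proof of Theorem~2, since the lemma has no proof of its own: replace the real $\Sigma$-transcripts by simulated ones via Lemma~\ref{lem:zk-indistinguishability}, then swap $m_{\mathrm{id}}^{(0)}$ for $m_{\mathrm{id}}^{(1)}$ at no cost because the simulated view is witness-independent. Your Step~1 makes explicit the hiding of the blinded triple, which the paper assumes without argument when it says the statements are unchanged in $\mathsf{G}_2$, and your return hybrid closes the argument where the paper instead cites this very lemma.

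One correction: you analyzed a different randomization from the paper's, which is $\tilde A = A^{r_1 r_2}$, $D = B^{r_2}$, $\tilde B = D^{r_1}\tilde A^{-e}$. Your conclusion still holds, because $r_1r_2$ and $r_2$ are independent and uniform, so $\tilde A$ and $D$ are independent uniform elements of $\mathbb{G}_1$ and $\tilde B = \tilde A^{x}$. Note that $D$ is a second free uniform element, not a value fixed by the pairing relation as your Step~1 suggests.
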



\begin{proof}[\textbf{Fair Settlement}]
The contract stores the last accepted heartbeat. A fairness violation requires accepting $(j,y_j)$ off the valid chain. We reduce this to a preimage attack: embed a challenge hash $z$ as $y_{j-1}$ and define $H'(x)=H(s\_{id}\parallel\!(j-1)\parallel x)$. By Lemmas~\ref{lem:prefix-preimage} and~\ref{lem:hashchain-soundness}, producing a consistent off-chain element is infeasible, so an inconsistent heartbeat is accepted only with negligible probability.

\begin{lemma}[Preimage Resistance]
\label{lem:prefix-preimage}
Let $H:\{0,1\}^*\rightarrow\{0,1\}^\lambda$ be a preimage-resistant hash function. Then for any fixed prefix string $\rho$, the derived function $H_\rho(x) = H(\rho \,\|\, x)$ is also preimage resistant against all PPT adversaries.
\end{lemma}

\begin{lemma}[Hash-Chain Soundness]
\label{lem:hashchain-soundness}
Given a hash chain defined by
$y_{i-1} = H(s\_{\mathrm{id}} \,\|\, (i-1) \,\|\, y_i)$,
no polynomial-time adversary can produce a value $y_j$
that is consistent with a previously accepted $y_{j-1}$ unless it knows
a valid successor $y_{j+1}$, except with negligible probability in~$\lambda$.
\end{lemma}

\end{proof}



\begin{proof}[\textbf{Collusion Resistance}]
Colluders reveal only their own credentials, attributes, and state, never honest users' hidden identifiers. By Lemma~\ref{lem:collusion-view}, this leaks no honest user's hidden attributes, so the adversarial view stays indistinguishable from the unlinkability experiment. If colluders could link two honest presentations with non-negligible advantage, a reduction simulating their secrets would break Lemma~\ref{lem:unlinkability}. Hence collusion links only with negligible probability.
\end{proof}

\begin{lemma}[View Preservation Under Collusion]
\label{lem:collusion-view}
    Colluding users learn no additional information about honest users' hidden attributes or witnesses beyond what is revealed in public presentations. Hence their joint view is computationally indistinguishable from the adversarial view in the unlinkability  experiment.
\end{lemma}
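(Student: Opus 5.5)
The plan is a hybrid argument relative to the unlinkability experiment. The coalition's view is split into two parts: its own secret material (credentials, signatures, hidden identifiers, and internal state), and the transcripts of honest users' presentations. The aim is to show that the first part is computable from information independent of honest users' witnesses. Concretely, I would build a reduction $\mathcal{B}$ that plays the unlinkability game from Lemma~\ref{lem:unlinkability}. $\mathcal{B}$ generates all colluders' credentials itself, using the issuance oracle available in that game, and so knows every colluder secret. It then runs the coalition internally and forwards the honest challenge presentations it receives from its own challenger.

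The key steps, in order, are as follows.

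\textbf{Step 1.} Observe that colluders' credentials are produced by independent runs of $\mathsf{Sign}$, with fresh $(e,s)$ for each. They are therefore statistically independent of any honest user's $m_{\mathrm{id}}$, except through the public keys, which the challenger also exposes.

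\textbf{Step 2.} Observe that every honest presentation the coalition sees, including presentations sent to different BMs that may pool their views, can be replaced by a simulated transcript via Lemma~\ref{lem:zk-indistinguishability}. The simulator is invoked once per presentation, and the loss is bounded by a union or hybrid bound over the polynomially many presentations.

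\textbf{Step 3.} Conclude that the pooled view is exactly what $\mathcal{B}$ can produce. This yields Lemma~\ref{lem:collusion-view}. Any coalition advantage in linking two honest presentations then translates directly into an unlinkability advantage for $\mathcal{B}$, which is negligible.

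The second half of the statement, that the coalition cannot \emph{manipulate settlement}, I would handle separately. First, the contract logic in Algorithm~\ref{alg:settle-payment} is deterministic. Second, only the BM registered at \texttt{OpenSession} can call \texttt{Claim}. So any coalition of BMs can at most submit claims on sessions where one of them is registered. Within such a session, a successful over-claim requires some accepted $(i,y_i)$ beyond what the user released. By the Fair Settlement theorem, which rests on Lemmas~\ref{lem:prefix-preimage} and~\ref{lem:hashchain-soundness}, this happens with negligible probability, and pooling heartbeats across BMs does not help. Cross-session mixing is ruled out by the domain separation with $s\_id$ inside $H$, so values from another session's chain are useless, again by preimage resistance. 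Replay and double-claiming are excluded because \texttt{idx\_last} is monotonic.

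The main obstacle I expect is Step 3. Colluders may include users who share issuers with honest users, and the Fiat--Shamir proofs use a common random oracle. The reduction must program the oracle for simulated honest proofs while still answering the colluders' own hash queries consistently. I would first try the standard approach of programming only at the points used by simulated transcripts, and then argue that collisions with adversary queries occur with probability at most $q_H/p$. I would also check that the shared hidden identifier in $VC_A$/$VC_B$ is proven equal only inside the zero-knowledge proof and is never revealed, so colluders' knowledge of their own $m_{\mathrm{id}}$ gives no handle on the honest users'.
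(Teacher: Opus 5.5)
Your proposal is correct and follows essentially the paper's reasoning. The paper gives no standalone proof of this lemma; its justification sits inside the Collusion Resistance proof, where colluders' credentials and state are independent of honest witnesses and a reduction simulating their secrets collapses the coalition's view to an unlinkability-experiment view, and your version is a more explicit form of this with the zero-knowledge hybrid over many presentations and the random-oracle bookkeeping. Your settlement paragraph is not needed here, since the lemma makes no claim about settlement manipulation; that half belongs to Theorem~4 itself.
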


\subsection{Verifiable Presentation with Selective Disclosure}

\textbf{Blinded Triple Constructions.}
For each credential, the user samples: \\
$r_1, r_2 \xleftarrow{\$} \mathbb{Z}_q,
r_3 = r_2^{-1}\!\!\!\!\!\pmod{q}$.
Given the signer’s BBS$+$ signature $(A,e,s)$ and the attribute commitment $B = g_1 \cdot h_0^s \prod_i h_i^{m_i} h_{id}^{m_{id}}$, the user computes: \\
$\tilde{A} = A^{r_1 r_2},
D = B^{r_2},
\tilde{B} = D^{r_1} \cdot (\tilde{A})^{-e}.$
This blinded triple $(\tilde{A},D,\tilde{B})$ removes all linkability to the original signature while preserving the necessary algebraic relations for later pairing checks.

\textbf{Commitments for $VC_A$.}
The prover samples fresh randomness: $\alpha_{e,A}, \alpha_{r1,A},$ $\alpha_{r3,A},\alpha_{s,A}, \alpha_{id}
~\xleftarrow{\$}~\mathbb{Z}_q,$
and constructs (similar for $VC_B$):
$T_{1_A}= (\tilde{A}_A)^{\alpha_{e,A}} (D_A)^{\alpha_{r1,A}},$
$T_{2_A}= (D_A)^{\alpha_{r3,A}}\, h_{id}^{\alpha_{\mathrm{id}}}\, h_0^{\alpha_{s,A}}.$
Note that the same $\alpha_{\mathrm{id}}$ is used for both $VC_A$ and $VC_B$, which enforces that both credentials contain the same hidden device identifier.

\textbf{Fiat--Shamir Challenge.}
The challenge is generated over all transcript elements:
{\scriptsize
\[
\begin{aligned}
c &= \!\Big(
\texttt{CTX} \parallel 
W_A,W_B \parallel 
\tilde{A}_A,\tilde{A}_B
\parallel \text{reveal}_A,\text{reveal}_B \parallel 
(T_{1_A},T_{2_A},T_{1_B},T_{2_B})
\parallel s\_id \parallel \text{bm\_addr}
\Big)
\end{aligned}
\]
}
\textbf{Response Computation.}
The user computes $\widehat{m}_{\mathrm{id}} = \alpha_{\mathrm{id}} + c\, m_{id}$, and: 
{\scriptsize
\[
\begin{aligned}
\widehat{e}_A &= \alpha_{e,A} + c e_A,
&
\widehat{e}_B &= \alpha_{e,B} + c e_B,
& 
\widehat{s}_A &= \alpha_{s,A} + c s_A,
& 
\widehat{s}_B &= \alpha_{s,B} + c s_B, \\ 
\widehat{r}_{1_A} &= \alpha_{r1,A} - c r_{1_A},
&
\widehat{r}_{1_B} &= \alpha_{r1,B} - c r_{1_B}, 
& 
\widehat{r}_{3_A} &= \alpha_{r3,A} - c r_{3_A},
&
\widehat{r}_{3_B} &= \alpha_{r3,B} - c r_{3_B}.  
\end{aligned}
\]
}
\textbf{Verifiable Presentation (VP).}
The VP includes:
{\scriptsize
\[
\Big(
\text{reveal}_A,\text{reveal}_B, 
(\tilde{A}_A,\tilde{B}_A,D_A,T_{1_A},T_{2_A}), 
(\tilde{A}_B,\tilde{B}_B,D_B,T_{1_B},T_{2_B}), c, 
\widehat{e}_A,\widehat{r}_{1_A},\widehat{r}_{3_A},\widehat{s}_A, 
\]
\[
\widehat{e}_B,\widehat{r}_{1_B},\widehat{r}_{3_B},\widehat{s}_B,\ 
\widehat{m}_{id},\ 
s\_id, {bm\_addr}
\Big)
\]
}
This presentation is verified as follows: Schnorr proof~\#1 validates the blinded structure, Schnorr proof ~\#2 verifies the disclosed attributes and randomness, and a pairing check confirms the correctness of the underlying signature.

\textbf{Challenge Recompute.}
\bm~reconstructs the challenge $c'$ using the identical transcripts. Then it computes $B_{rev}$ from revealed attributes: $B_{rev} = g_1\prod_{i\in \text{rev}} h_i^{m_i}.$ For each $VC$, $B$ satisfies: $B = B_{\mathrm{rev}}\cdot h_{id}^{m_{id}}h_0^{s}$.

\textbf{Schnorr \#1: Blinded Structure Verification.}
BM checks, for $VC_A$: $\tilde T_{1_A} 
= (\tilde B_A)^{c} (\tilde A_A)^{\widehat e_A} (D_A)^{\widehat r_{1_A}}
\stackrel{?}{=} T_{1_A}.$

$VC_B$ follows identically.

\textbf{Schnorr \#2: Attribute and Randomness Verification.}
For $VC_A$, BM checks: $\tilde T_{2_A}
= (B_{\mathrm{rev},A})^{c} (D_A)^{\widehat r_{3_A}}
\,h_{id}^{\widehat m_{id}}\,h_0^{\widehat s_A}
\stackrel{?}{=} 
T_{2_A}.$

$VC_B$ follows identically.

\textbf{Pairing-Based Signature Verification.}
BM verifies that the blinded triple corresponds to a valid BBS$+$ signature. For $VC_A$: $e({\tilde A_A}, {W_A})
\stackrel{?}{=}
e({\tilde B_A}, {g_2}).$

This check ensures that a valid BBS$+$ signature under $W$ exists on the committed attributes. $VC_B$ follows identically.

\end{document}